\newtheorem{remark}{Remark}{}
{}
\newtheorem{lemma}{Lemma}[]
\newtheorem{proposition}{Proposition}{}
\newtheorem{theorem}{Theorem}{}
\begin{document}

\title{Optimal Transport-Based Decentralized Multi-Agent Distribution Matching}

\author{Kooktae Lee,~\IEEEmembership{Member,~IEEE}
\thanks{K. Lee is with the Department of Mechanical Engineering, New Mexico Institute of Mining and Technology, Socorro, NM 87801, USA. e-mail: kooktae.lee@nmt.edu.}
\thanks{This work was supported by NSF CAREER Grant CMMI-DCSD-2145810.}}


\IEEEpubid{
}

\maketitle

\begin{abstract}
This paper presents a decentralized control framework for distribution matching in multi-agent systems (MAS), where agents collectively achieve a prescribed terminal spatial distribution. The problem is formulated using optimal transport (Wasserstein distance), which provides a 
principled measure of distributional discrepancy and serves as the basis for the control design. To avoid solving the global optimal transport problem directly, the distribution-matching objective is reformulated 
into a tractable per-agent decision process, enabling each agent to identify its desired terminal locations using only locally available information. A sequential weight-update rule is introduced to construct 
feasible local transport plans, and a memory-based correction mechanism is incorporated to maintain reliable operation under intermittent and range-limited communication. Convergence guarantees are established, 
showing cycle-wise improvement of a surrogate transport cost under both linear and nonlinear agent dynamics. Simulation results demonstrate that the proposed framework achieves effective and scalable distribution 
matching while operating fully in a decentralized manner.
\end{abstract}

\begin{IEEEkeywords}
Multi-Agent Systems, Distribution Matching, Optimal Transport, Wasserstein Distance, Decentralized Control
\end{IEEEkeywords}

\section{Introduction}
Coordinating multiple autonomous agents to achieve a desired spatial 
configuration is a central problem in multi-agent systems. Applications 
ranging from sensing and monitoring to logistics and robotic deployment 
often require agents to be arranged in patterns reflecting mission 
objectives or external specifications. Achieving such configurations 
under dynamics, communication limits, and local information constraints 
has motivated a variety of distributed strategies for shaping collective 
spatial behavior, including approaches based on coverage control, density 
regulation, and optimal transport.

\textbf{Literature Survey:}
Coverage control represents one of the most established distributed
approaches for organizing agents in space. It provides a classical
framework in which agents optimize a sensing-based coverage functional.
Foundational methods are based on Voronoi-partition gradients~\cite{cortes2004coverage}, and 
subsequent work developed adaptive distributed implementations for 
networked robots~\cite{schwager2009decentralized}. Recent extensions address heterogeneous 
environments~\cite{liu2021multi}, safety~\cite{bai2023safe}, and online learning of unknown density 
fields~\cite{zhang2024multi}. While well-suited for persistent sensing and exploration, 
coverage control converges to centroidal Voronoi configurations and 
therefore cannot enforce a prescribed terminal distribution.

A complementary direction focuses on distribution steering using 
mean-field and continuum models. Zheng et al.~\cite{zheng2021transporting} proposed 
mean-field feedback laws, but these approaches rely on global density 
information and continuum approximations. Cui and Li~\cite{cui2024density} developed 
predictive density regulation with robustness guarantees, yet still 
assume centralized density availability and typically do not provide 
decentralized finite-agent, discrete-time controllers. Survey work in 
mean-field swarm theory~\cite{elamvazhuthi2019mean} highlights these structural assumptions, 
which limit applicability under local communication and agent-level constraints.

Optimal Transport (OT)-based methods offer another viewpoint by modeling 
distribution evolution geometrically. The authors in \cite{krishnan2022multiscale} 
interpreted coverage algorithms as gradient flows in Wasserstein space, 
and their recent work proposed a distributed online OT optimizer~\cite{krishnan2025distributed}. 
Although computationally scalable, these approaches operate at the 
optimization layer and do not directly yield decentralized feedback 
controllers for agents with discrete-time dynamics. Moreover, OT-based 
density regulation typically acts on continuous probability fields rather 
than explicit finite-agent terminal configurations~\cite{chen2016optimal,bandyopadhyay2014probabilistic}.

Nonuniform area coverage methods instead align time-averaged 
trajectories with a reference density map. Ergodic metrics provide a 
trajectory-level alignment tool~\cite{mathew2011metrics, rao2024learning}, and Ivić et 
al.~\cite{ivic2019autonomous} demonstrated spatially adaptive spraying strategies. Recent 
density-driven optimal control (D$^2$OC) frameworks steer time-averaged 
densities toward reference maps while respecting agent dynamics and 
physical constraints~\cite{seo2025smcs, seo2025tcst}. Such methods shape evolving densities 
but do not specify where agents ultimately reside at the mission’s end.

While these research directions share the goal of shaping collective behavior, our approach focuses on terminal distribution matching for discrete agents. Unlike ergodicity-based methods or $D^2OC$ that optimize time-averaged trajectories, we explicitly determine finite-agent trajectories to satisfy a prescribed final configuration. This work is the first to address terminal matching within discrete agent systems, providing a clear departure from continuum-level PDE steering or trajectory alignment over time. This distinction is vital for tasks like establishing static agent formations for monitoring or coordinated multi-robot inspection, where the final static placement of each agent is more critical than its cumulative coverage behavior.

\textbf{Contributions:}
The main contributions of this work are as follows. 
1) The global Wasserstein distribution-matching problem is reformulated 
into a tractable per-agent decision process that operates using only local 
information, enabling scalable implementation without global coordination;
2) A cycle-wise convergence guarantee is established for the resulting 
distribution-matching scheme under both linear and nonlinear agent dynamics, 
providing a rigorous performance guarantee for decentralized 
optimal transport-based control;
3) A fully decentralized implementation is developed by combining local 
communication with a memory-based correction rule, ensuring robust 
operation under realistic communication limitations.
Taken together, these elements form a practical and scalable approach to 
multi-agent terminal distribution matching and establish a direct 
connection between OT-based distribution modeling and decentralized 
feedback control for finite-agent systems.

\section{Problem Formulation}
\noindent \textbf{Notation:}  
\( \mathbb{R} \) denotes the set of real numbers, and \( \mathbb{N} \) denotes the set of natural numbers. 
\( \mathbb{R}^n \) represents the \( n \)-dimensional Euclidean space. 
For a vector \( x \in \mathbb{R}^n \), \( \|x\| \) denotes the Euclidean norm, and \( x^{\top} \) denotes its transpose. 
The position of agent \( i \) at discrete time \( k \in \mathbb{N} \) is represented as \( x_i(k) \in \mathbb{R}^n \). 
The notation \( x \succeq 0 \) indicates that \( x \) satisfies the element-wise inequality, meaning each entry is non-negative. 
For any set \( \mathcal{S} \), \( |\mathcal{S}| \) denotes its cardinality (number of elements).

\medskip

Let the measures $\mu$ and $\nu$ be supported on 
point sets $\{x_i\}_{i=1}^M$ and $\{y_j\}_{j=1}^N$, where \(M\) denotes the number of agents contributing to the 
empirical distribution and \(N\) denotes the number of target sample 
points representing the prescribed distribution \(\nu\),
with corresponding weights $\{\alpha_i\}$ and $\{\beta_j\}$, respectively. In this discrete setting, the optimal transport problem between $\mu$ and $\nu$
reduces to computing the squared 2-Wasserstein distance (also known as the Kantorovich formulation \cite{villani2008optimal}),
\begin{equation}
\begin{aligned}
&\mathcal{W}_2^2(\mu, \nu) = \min_{\pi_{ij} \in \mathbb{R}^{M \times N}} \quad  \sum_{i=1}^M \sum_{j=1}^N \pi_{ij} \|x_i - y_j\|^2 \\
&\begin{array}{ll}
\text{s.t.} \quad & \sum_{j=1}^N \pi_{ij} = \alpha_i,\quad \sum_{i=1}^M \pi_{ij} = \beta_j,\quad \pi_{ij} \geq 0,\quad \forall i,j, \\
& \sum_{i=1}^M \sum_{j=1}^N \pi_{ij} = 1,
\end{array}
\end{aligned}
\label{eqn:kantorovich-lp}
\end{equation}
where $\pi_{ij}$ denotes the amount of mass transported from $x_i$ to $y_j$. The Wasserstein distance thus provides a principled way to compare two distributions, analogous to the Euclidean distance between two points.

\begin{remark}
In the discrete formulation, the coefficients $\alpha_i$ represent the 
mass or importance assigned to each agent location $x_i$. Throughout this 
work, we assume $\alpha_i = 1/M$, reflecting that each agent contributes 
equally to the empirical distribution. However, non-uniform values are 
also admissible. For example, if certain agents possess higher sensing resolution, 
stronger actuation capability, or play a more influential role in the 
team’s decision-making hierarchy, one may assign larger weights to the 
corresponding $\alpha_i$. In contrast, the coefficients $\beta_j$ 
specify the mass prescribed by the target distribution at each sample 
location $y_j$ and encode the structure of the desired reference density.
\end{remark}

\begin{remark}
In this work, the state considered in the optimal transport formulation
corresponds to the spatial position of each agent in $\mathbb{R}^n$. 
Accordingly, both the empirical measure $\mu$ and the target distribution 
$\nu$ are defined over positional variables, and the transport cost 
$\|x - y\|^2$ quantifies displacement in this spatial domain.
\end{remark}

We consider a multi-agent system where each agent follows discrete-time 
dynamics. Initially, agents are located at positions 
\( \{x_i(0)\}_{i=1}^M \), and their spatial configuration is represented 
by the empirical distribution
\(
\mu_0 := \frac{1}{M} \sum_{i=1}^M \delta_{x_i(0)},
\)
where \( \delta_x \) denotes the Dirac measure centered at 
\( x \in \mathbb{R}^n \). At the final time step \( T \), the agents reach 
positions \( \{x_i(T)\}_{i=1}^M \), forming the terminal distribution
\(
\mu_T := \frac{1}{M} \sum_{i=1}^M \delta_{x_i(T)}.
\)
These empirical measures capture the spatial distribution of the agents 
at the start and end of the mission.

In our formulation, the target distribution \( \nu \) is represented in 
a discrete form by a set of sample points \( \{y_j\}_{j=1}^N \), which 
serve as representative support locations of the underlying reference 
density. These points are obtained by discretizing a continuous spatial 
map that specifies the desired target allocation. For instance, in an 
environmental monitoring task, a priority field indicating regions of 
higher sensing importance defines the target density \( \nu \), and the 
resulting sample points \( \{y_j\} \) provide a discrete approximation 
of this field for the transport computation. The associated weights 
\( \beta_j \) encode the mass assigned to each sample point, and this 
discretization is standard in optimal transport when a continuous target 
distribution must be compared against a finite set of agent positions.

Although nonuniform weights \( \beta_j \) may, in general, be assigned 
to the sample points of the target distribution \( \nu \), we adopt the 
uniform choice \( \beta_j = \tfrac{1}{N} \) for simplicity of 
presentation. Under this uniform assumption, the target distribution 
becomes
\(
\nu = \frac{1}{N} \sum_{j=1}^{N} \delta_{y_j}.
\)
This simplification is used solely for clarity in exposition while the 
proposed formulation and control framework apply without modification to 
general nonuniform target weights as well.

The goal is to steer the $M$ agents so that their terminal empirical
distribution $\mu_T = \tfrac{1}{M}\sum_{i=1}^M \delta_{x_i(T)}$
approximates the prescribed target distribution $\nu$ in the
2-Wasserstein metric.  
Equivalently, we aim to determine decentralized control inputs
$\{u_i(k)\}$ such that the final agent positions $x_i(T)$ achieve the
best distribution matching to the $N$ target samples $\{y_j\}$, as
quantified by the Kantorovich formulation in \eqref{eqn:kantorovich-lp}.
This highlights the core problem: guiding the agents’ collective
distribution toward the desired target distribution via an
optimal transport-based control framework.

The agent dynamics are assumed to follow one of the following models:
\begin{enumerate}
    \item \textbf{Linear Time-Invariant (LTI) dynamics:}
    \begin{equation}
        x_i(k+1) = A_i x_i(k) + B_i u_i(k), \label{eqn:LTI}
    \end{equation}
    where $x_i(k) \in \mathbb{R}^n$, $u_i(k) \in \mathbb{R}^m$, and $(A_i, B_i)$ is a controllable pair.

    \item \textbf{Nonlinear control-affine dynamics:}
    \begin{equation}
        x_i(k+1) = f_i(x_i(k)) + g_i(x_i(k)) u_i(k), \label{eqn:nonlinear_con_aff}
    \end{equation}
    where $f_i: \mathbb{R}^n \to \mathbb{R}^n$ and $g_i: \mathbb{R}^n \to \mathbb{R}^{n \times m}$ are smooth maps representing the drift and control influence.
\end{enumerate}

To steer the agents toward a desired distribution, a distributed optimization problem is formulated to align the empirical distribution of the agent states with a prescribed reference distribution. The mismatch is quantified using the squared 2-Wasserstein distance, leading to the optimization problem:
\[
    \min_{\{u_i(k)\}} \; J = \mathcal{W}_2^2\left(\mu_k, \nu\right), \quad \text{subject to} \quad \eqref{eqn:LTI} \text{ or } \eqref{eqn:nonlinear_con_aff},
\]
where \( \mu_k = \frac{1}{M} \sum_{i=1}^M \delta_{x_i(k)} \) is the empirical 
distribution of the agent states at time \( k \), and 
\( \nu = \frac{1}{N}\sum_{j=1}^{N} \delta_{y_j} \) represents the reference 
distribution supported on the target points \( \{y_j\}_{j=1}^N \).
Because each agent state \(x_i(k)\) evolves over time under the system 
dynamics, the empirical distribution \(\mu_k\) is time-varying. As a 
consequence, the objective \( \mathcal{W}_2(\mu_k,\nu) \) is also a 
time-dependent quantity evaluated along the closed-loop evolution of the 
multi-agent system.

This formulation enables the design of distributed control policies that minimize transport cost while respecting system dynamics, supporting decentralized implementation, and unifying both linear and nonlinear agent models under a common optimal transport framework.

\begin{remark}
The optimal control problem becomes challenging once the Wasserstein distance is incorporated into the objective. Although the formulation in \eqref{eqn:kantorovich-lp} is a linear program when the agent positions are fixed, the agent locations in our setting are decision variables that are not known in advance. They are generated by the system dynamics and therefore depend on the control inputs. As a result, the transport cost depends jointly on both the coupling matrix \( \pi_{ij} \) and these control-dependent agent positions, which introduces a bilinear coupling between the two sets of variables. This coupling makes the overall optimization problem nonconvex with respect to the control trajectories even under linear dynamics. In addition, the cost introduces global interactions among all agents, which complicates the design of decentralized control policies under limited communication.
\end{remark}


\section{Approach: Reducing Wasserstein Distance via Sequential Local Updates}

The global optimal transport problem couples all agents and is not directly tractable. To make the assignment manageable, we consider one agent at a time. With the position \(x_i(k)\) fixed, the Kantorovich problem in \eqref{eqn:kantorovich-lp} reduces to determining how this agent distributes its mass \(1/M\) among the sample points.
We call this reduced one-agent problem the \emph{local OT problem}. It has a simple analytic solution: the agent assigns mass to the closest sample points first, subject to the current capacities \( \beta_j \) defined in \eqref{eqn:kantorovich-lp}.

Using this idea, the method consists of two phases: agents first determine their local sample points by sequentially solving their local OT problems, and then each agent independently executes its control inputs to move toward those assigned points.

\subsection{Sample Selection Phase: Sequential Target Assignment}

At the beginning of each \textit{\( H \)-step cycle}, where $H\in\mathbb{N}$, agents sequentially assign local sample points by solving a simplified transport problem. For each agent \( i \in \{1, \dots, M\} \), the following steps are performed:

\begin{enumerate}[leftmargin=*, itemindent=4pt]
\item \textbf{Solve Local Transport Problem:}  
Given a single agent at \( x_i(k) \), the local OT problem 
admits an analytic, greedy solution.  
Define the distances \( d_j := \|x_i(k) - y_j\| \) and reorder the 
indices so that \( d_{j_1} \le d_{j_2} \le \cdots \).  
The agent then allocates portions of its mass \(1/M\) to the nearest 
sample points in this order. Specifically,
$\pi_{i j_\ell}
    = \min\!\left(\beta_{j_\ell},\;
    \frac{1}{M} - \sum_{m<\ell} \pi_{i j_m}\right)$,
where the term 
\(
\frac{1}{M} - \sum_{m<\ell} \pi_{i j_m}
\)
represents the remaining mass after allocations to the closer 
sample points \( j_1,\dots,j_{\ell-1} \).  
This greedy nearest-neighbor allocation continues until the full mass 
\(1/M\) is exhausted and yields a closed-form local transport plan.

\item \textbf{Update Sample Point Weights:}  
Using the computed transport plan \( \pi_{ij} \), the residual 
capacities are updated via  
$\beta_j \leftarrow \beta_j - \pi_{ij}$,
which reduces the available capacity for subsequent agents.

\item \textbf{Assign Local Sample Points:}  
Sample points with \( \pi_{ij} > 0 \) become the local sample points of 
agent \( i \).  
Sequential assignment in a predefined agent order prevents over-allocation 
and ensures that the resulting local sample sets remain capacity-disjoint, 
providing a feasible global transport plan.
\end{enumerate}

Because each agent immediately reduces the residual capacities $\beta_j$
of the sample points it selects, subsequent agents can only allocate mass
to the remaining available capacity. This sequential update prevents
capacity oversubscription and ensures that the local transport plans of
different agents remain capacity-disjoint. In turn, this procedure
produces an aggregate transport plan that is feasible with respect to the
global target distribution, while the full global Kantorovich problem
cannot be solved directly at this stage because the agent positions are not
fixed and would appear as optimization variables. The sequential rule is
more conservative than the global OT assignment, since later
agents must select from a reduced feasible set of sample points. However,
this conservative structure provides a computationally efficient and
provably feasible way to construct a cycle-wise transport plan.

\subsection{Control Execution Phase: Parallel Trajectory Update}

Following sample point assignment, agents enter the control phase, during which each agent independently applies a control law over the next \( H \) steps to minimize the local Wasserstein distance (i.e., the cost induced 
by the agent’s local OT problem) associated with the agent's local sample points. The dynamics follow either the linear system in \eqref{eqn:LTI} or the constrained nonlinear system in \eqref{eqn:nonlinear_con_aff}.

\medskip

This two-phase cycle repeats every \( H \) time steps as follows:
\begin{itemize}[leftmargin=*, itemsep=0pt]
    \item \textbf{Sample Selection Phase:} Agents sequentially solve the local OT problem to determine their respective local sample points.
    \item \textbf{Control Execution Phase:} After selecting local sample points, each agent applies its control inputs over the next \( H \) steps in a distributed manner to minimize the local Wasserstein distance associated with its assigned sample points.
\end{itemize}

Repeating these two phases iteratively updates the agents’ positions in
accordance with the locally assigned sample points. A rigorous analysis
of how these updates influence the global Wasserstein distance is
provided later in Section~VI.

\begin{remark}[Centralized Communication vs. Distributed Control]
The sequential update during Sample Selection Phase requires centralized communication for global coordination, ensuring that agents do not select the same sample points. However, the control input computation can be performed independently by each agent, enabling a \textbf{distributed control} scheme. This allows agents to update their trajectories in parallel after the Sample Selection Phase. The reliance on centralized communication, however, can be impractical in real-world scenarios. Therefore, a more realistic fully decentralized approach, which eliminates the need for centralized coordination, will be presented later.
\end{remark}

In the following sections, we derive the optimal control inputs for both LTI and nonlinear models to minimize the local Wasserstein distance.


\section{Wasserstein Distance Convergence Mapping via Optimal Control}
\subsection{Linear Time-Invariant System Case}
Let agent $i \in \{1, \dots, M\}$ have initial position $x_i(0) \in \mathbb{R}^n$, and assume its dynamics evolve under a discrete-time LTI system \eqref{eqn:LTI}.
The planning horizon is $H \in \mathbb{N}$.

Let the local sample points assigned to agent \( i \) be denoted by \( \{y_j \in \mathbb{R}^n\}_{j \in \mathcal{S}_i} \), where each point \( y_j \) is associated with a non-negative weight \( \beta_{ij} \geq 0 \). The index set \( \mathcal{S}_i \subseteq \{1, \dots, N\} \) represents the subset of global sample points that are locally relevant to agent \( i \), selected based on the process described in Sample Selection Phase. The weights satisfy the mass transport condition \( \sum_{j \in \mathcal{S}_i} \beta_{ij} = \frac{1}{M} \), where \( M \) is the total number of agents.
 The objective is to find the optimal control input for agent $i$ at any time $k$, to minimize the local Wasserstein distance with a horizon length $H$ defined by:
\begin{equation}
J := \mathcal{W}_2^2(\delta_{x_i(k+H)}, \nu_i)=\sum_{j\in\mathcal{S}_i} \tilde{\pi}_{ij} \|x_i(k+H) - y_j\|^2,\label{eqn:obj_LTI}
\end{equation}
where $\tilde{\pi}_{ij}$ denotes the local mass transportation plan from $x_i(k)$ to the local sample points, which is already known from Sample Selection Phase.

Then, the following results are provided to derive the optimal control input $u_i^*$ minimizing \eqref{eqn:obj_LTI}.

\begin{lemma}[Conversion to a Constrained Minimum-Norm Form]
Consider the weighted least-squares problem
\[
\min_{u\in\mathbb{R}^m} 
\sum_{j=1}^p \pi_j \|Ax + Bu - y_j\|^2,
\]
where $\pi_j \ge 0$ and $y^* := (\sum_j \pi_j y_j)/(\sum_j \pi_j)$ is the
weighted average target. If the equation 
$Ax + Bu = y^*$
is feasible, then every solution of this equation achieves the minimum value of the weighted least-squares cost. Among these minimizers, the minimum-norm one is obtained by solving the constrained problem
\[
\min_{u} \|u\|^2 
\quad \text{s.t.} \quad
Ax + Bu = y^*.
\]
This is a standard result for quadratic objectives with linear equality
constraints and hence, the proof is omitted.
\end{lemma}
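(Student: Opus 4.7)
The plan is to prove the equivalence by reducing the weighted least-squares sum to an affine-plus-constant form using the standard bias-variance (or parallel axis) identity for weighted squared distances. First, I would write $z := Ax + Bu$ and expand
\[
\sum_{j=1}^p \pi_j \|z - y_j\|^2
= \sum_{j=1}^p \pi_j \bigl\|(z - y^*) + (y^* - y_j)\bigr\|^2,
\]
then distribute to get three terms. The cross term is
$2(z - y^*)^\top \sum_{j} \pi_j (y^* - y_j)$, which vanishes by the very definition $y^* = (\sum_j \pi_j y_j)/(\sum_j \pi_j)$, since $\sum_j \pi_j (y^* - y_j) = 0$.

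Consequently, I obtain the decomposition
\[
\sum_{j=1}^p \pi_j \|Ax + Bu - y_j\|^2
= \Bigl(\sum_{j=1}^p \pi_j\Bigr)\|Ax + Bu - y^*\|^2 + C,
\]
where $C := \sum_j \pi_j \|y^* - y_j\|^2$ does not depend on $u$. Thus, minimizing the weighted least-squares cost in $u$ is equivalent to minimizing $\|Ax + Bu - y^*\|^2$. Under the feasibility assumption that there exists some $u$ with $Ax + Bu = y^*$, the minimum value of $\|Ax + Bu - y^*\|^2$ is zero, and the set of minimizers of the original objective coincides exactly with the affine solution set $\{u : Bu = y^* - Ax\}$.

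Since the solution set is a nonempty affine subspace, the second claim reduces to selecting the unique element of minimum Euclidean norm in that subspace. This is the standard linearly constrained minimum-norm problem, whose solution can be expressed via the Moore--Penrose pseudoinverse as $u^* = B^{\dagger}(y^* - Ax)$, or equivalently obtained from the KKT conditions of $\min \|u\|^2$ subject to $Bu = y^* - Ax$. This confirms that, among all minimizers of the weighted least-squares cost, the minimum-norm one is exactly the solution of the stated constrained problem.

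There is no substantial obstacle here; the only step that deserves care is verifying that the cross term vanishes, which is a direct algebraic consequence of the definition of $y^*$. The rest is the routine fact that a quadratic cost with a feasible zero-residual condition is minimized precisely by the feasible set of that condition, and that minimum-norm selection over an affine subspace is a standard pseudoinverse computation.
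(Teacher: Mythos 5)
Your proof is correct and is exactly the standard argument the paper appeals to when it omits the proof: the parallel-axis decomposition $\sum_j \pi_j\|z-y_j\|^2 = (\sum_j\pi_j)\|z-y^*\|^2 + C$ with the cross term vanishing by the definition of $y^*$, followed by minimum-norm selection over the feasible affine set. The same completed-square identity appears explicitly in the paper's proof of Theorem~\ref{thm:lyapunov} (equation~\eqref{eq:Ji_completed_square_new}), so your route matches the paper's reasoning throughout.
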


The previous lemma characterizes the single-step case. In the $H$-step
setting used in the control phase, the control inputs are stacked into
the vector
\begin{equation}
U_i := [u_i(k)^\top,\dots,u_i(k+H-1)^\top]^\top \in \mathbb{R}^{mH},
\label{eq:stacked_input}
\end{equation}
and the resulting state update over $H$ steps is expressed through the
reachability matrix
\begin{equation}
\Phi_{H,i} := [A_i^{H-1}B_i,\; A_i^{H-2}B_i,\; \dots,\; B_i]\in \mathbb{R}^{n\times mH},
\label{eq:reachability_matrix}
\end{equation}
so that the terminal state condition becomes
\begin{equation}
x_i(k+H) = A_i^H x_i(k) + \Phi_{H,i} U_i.
\label{eq:terminal_state_relation}
\end{equation}

The multi-step version of the minimum-norm property is given next.

\begin{lemma}[Minimum-norm solution under affine constraint]
Let $\Phi_{H,i}$ have full row rank with $n \le mH$. For any vectors
$x_i(k)$ and $y_i^{*}$, consider the affine constraint
\begin{equation}
\Phi_{H,i} U_i = y_i^{*} - A_i^{H} x_i(k).  \label{eq:affine_constraint}
\end{equation}
Among all inputs satisfying this constraint, the minimum-norm solution is
$U_i^{*}
= \Phi_{H,i}^{\top}(\Phi_{H,i}\Phi_{H,i}^{\top})^{-1}
(y_i^{*} - A_i^{H}x_i(k))$.

The result follows immediately from the Moore-Penrose pseudoinverse characterization of minimum-norm solutions. Hence, the proof is omitted.
\end{lemma}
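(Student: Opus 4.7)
The plan is to verify the two defining requirements of a constrained minimum-norm solution: feasibility of the candidate $U_i^{*}$ and minimality of its norm among all feasible points. The full-row-rank hypothesis on $\Phi_{H,i}$ (with $n \le mH$) makes the Gram matrix $\Phi_{H,i}\Phi_{H,i}^{\top}$ symmetric positive definite and hence invertible, so the proposed closed form is well-defined. Feasibility then reduces to a one-line substitution: $\Phi_{H,i}U_i^{*} = \Phi_{H,i}\Phi_{H,i}^{\top}(\Phi_{H,i}\Phi_{H,i}^{\top})^{-1}(y_i^{*}-A_i^{H}x_i(k)) = y_i^{*}-A_i^{H}x_i(k)$, so that \eqref{eq:affine_constraint} is satisfied.

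For the optimality step, I would use the orthogonal decomposition induced by the fundamental subspaces of $\Phi_{H,i}$. Any feasible input admits the representation $U_i = U_i^{*} + v$ with $v \in \ker(\Phi_{H,i})$, since subtracting the two feasibility equations yields $\Phi_{H,i}v = 0$. On the other hand, $U_i^{*} = \Phi_{H,i}^{\top}w$ for $w := (\Phi_{H,i}\Phi_{H,i}^{\top})^{-1}(y_i^{*} - A_i^{H}x_i(k))$, so $U_i^{*}$ lies in $\mathrm{range}(\Phi_{H,i}^{\top})$. Because $\ker(\Phi_{H,i})$ and $\mathrm{range}(\Phi_{H,i}^{\top})$ are orthogonal complements in $\mathbb{R}^{mH}$, the cross term in $\|U_i\|^{2} = \|U_i^{*}\|^{2} + 2\langle U_i^{*}, v\rangle + \|v\|^{2}$ vanishes, leaving $\|U_i\|^{2} = \|U_i^{*}\|^{2} + \|v\|^{2} \ge \|U_i^{*}\|^{2}$, with equality only when $v = 0$. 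This simultaneously establishes optimality and uniqueness. Equivalently, one may derive the same formula from the KKT conditions of $\min \|U\|^{2}$ subject to $\Phi_{H,i}U = b$, where stationarity forces $U = \Phi_{H,i}^{\top}\lambda$ for some multiplier, and primal feasibility then pins down $\lambda = (\Phi_{H,i}\Phi_{H,i}^{\top})^{-1}b$.

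The argument is essentially routine, so there is no substantive obstacle. The only point requiring care is ensuring that the full-row-rank hypothesis, which in the multi-agent setting follows from controllability of $(A_i,B_i)$ once the horizon $H$ is chosen so that $mH \ge n$, is genuinely invoked: this hypothesis alone guarantees invertibility of $\Phi_{H,i}\Phi_{H,i}^{\top}$ and, in turn, that the prescribed closed form is both well-defined and the unique minimizer.
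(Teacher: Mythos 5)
Your proof is correct and simply fills in the standard argument that the paper itself omits: the paper invokes the Moore--Penrose pseudoinverse characterization of minimum-norm solutions without proof, and your feasibility-check plus orthogonality of $\ker(\Phi_{H,i})$ and $\mathrm{range}(\Phi_{H,i}^{\top})$ is exactly the textbook derivation of that characterization. No discrepancy with the paper's (omitted) reasoning.
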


\begin{theorem}[Optimal Control for Linear Time-Invariant Systems] 
\label{thm:LTI_control}
Consider the discrete-time LTI system in~\eqref{eqn:LTI}, where the pair
$(A_i,B_i)$ is controllable. Assume the state dimension is $n$ and the
planning horizon satisfies $H \ge n$. The stacked control sequence, the
finite-horizon reachability matrix, and the terminal-state relation are
given in~\eqref{eq:stacked_input}, \eqref{eq:reachability_matrix}, and
\eqref{eq:terminal_state_relation}, respectively.

Let $\mathcal{S}_i \subseteq \{1,\dots,N\}$ denote the local sample
points assigned to agent~$i$, and let $\tilde{\pi}_{ij} \ge 0$ be the
corresponding local transport weights. Define the local barycenter and
total mass by
\begin{equation}
\textstyle y_i^{*} := \frac{1}{\omega_i}\sum_{j\in\mathcal{S}_i}\tilde{\pi}_{ij}y_j,
\qquad
\omega_i := \sum_{j\in\mathcal{S}_i}\tilde{\pi}_{ij}.
\label{eq:centroid}  
\end{equation}
The associated local transport cost is
\begin{equation}
J(U_i) := \sum_{j\in\mathcal{S}_i} \tilde{\pi}_{ij} 
\|x_i(k+H)-y_j\|^2.
\label{eqn:cost_LTI}
\end{equation}

Then, the cost $J(U_i)$ is minimized by the control sequence
\begin{equation}
\begin{aligned}
u_i^{*}(k+t)
= B_i^\top (A_i^\top)^{H-1-t} 
\Gamma_{H,i}^{-1}\bigl(y_i^{*}-A_i^{H}x_i(k)\bigr),\\
\qquad t=0,\dots,H-1,
\end{aligned}
\label{eqn:optimal control for linear}
\end{equation}
where
$\Gamma_{H,i} := \sum_{t=0}^{H-1} A_i^{t}B_iB_i^\top(A_i^\top)^{t}$
is the discrete-time controllability Gramian.
\end{theorem}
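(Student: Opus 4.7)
The plan is to combine the two lemmas already stated so that the problem is decomposed into (i) reducing the weighted transport cost to a single-target tracking problem via the local barycenter, (ii) enforcing the affine reachability constraint, and (iii) selecting the minimum-norm input and simplifying the resulting pseudoinverse expression into Gramian form.

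First I would expand the cost in \eqref{eqn:cost_LTI} as a function of the terminal state $z := x_i(k+H)$. Using the standard identity $\sum_j \tilde{\pi}_{ij}\|z-y_j\|^2 = \omega_i\|z-y_i^{*}\|^2 + c$, where $c$ is a constant independent of $z$ and $y_i^{*}$ is the local barycenter in \eqref{eq:centroid}, the cost is minimized precisely when $z = y_i^{*}$ (whenever feasible). This is the content of Lemma~1 applied with $A=A_i^H$, $B=\Phi_{H,i}$, and $x=x_i(k)$, so the problem reduces to finding $U_i$ that drives $x_i(k+H)$ exactly to $y_i^{*}$, i.e.\ satisfies the affine constraint \eqref{eq:affine_constraint} with right-hand side $y_i^{*} - A_i^{H}x_i(k)$.

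Second, I would invoke controllability of $(A_i,B_i)$ together with $H\ge n$ to conclude that $\Phi_{H,i}$ has full row rank, so the constraint is feasible and Lemma~2 applies. The minimum-norm solution is $U_i^{*} = \Phi_{H,i}^{\top}(\Phi_{H,i}\Phi_{H,i}^{\top})^{-1}(y_i^{*}-A_i^{H}x_i(k))$. To match the stated form, I would then compute $\Phi_{H,i}\Phi_{H,i}^{\top}$ directly from the block definition \eqref{eq:reachability_matrix}, giving
\[
\Phi_{H,i}\Phi_{H,i}^{\top} \;=\; \sum_{t=0}^{H-1} A_i^{H-1-t}B_iB_i^{\top}(A_i^{\top})^{H-1-t},
\]
and re-index with $s=H-1-t$ to recognize this sum as the controllability Gramian $\Gamma_{H,i}$.

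Third, I would extract the $t$-th control block. Since $\Phi_{H,i}$ is organized so that the block multiplying $u_i(k+t)$ is $A_i^{H-1-t}B_i$, the corresponding block of $\Phi_{H,i}^{\top}$ is $B_i^{\top}(A_i^{\top})^{H-1-t}$, which yields exactly \eqref{eqn:optimal control for linear}. The one step that requires a bit of care (rather than being truly hard) is this block-indexing and the change of summation variable that identifies $\Phi_{H,i}\Phi_{H,i}^{\top}$ with $\Gamma_{H,i}$; aside from that, the argument is a direct chaining of the two lemmas and standard least-squares algebra, so no subtle obstacle is anticipated.
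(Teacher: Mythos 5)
Your proposal is correct and follows essentially the same route as the paper's own proof: reduce the weighted cost to the barycenter-tracking constraint via Lemma~1, invoke full row rank of $\Phi_{H,i}$ from controllability and $H\ge n$ to apply Lemma~2, identify $\Phi_{H,i}\Phi_{H,i}^{\top}$ with the Gramian $\Gamma_{H,i}$, and read off the $t$-th block. The re-indexing $s=H-1-t$ you flag is indeed the only bookkeeping step, and you handle it correctly.
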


\begin{proof}
Since $(A_i,B_i)$ is controllable and $H \ge n$, the reachability matrix
$\Phi_{H,i}$ has full row rank, i.e.,
$\operatorname{range}(\Phi_{H,i})=\mathbb{R}^n$.
Thus, the terminal constraint 
$\Phi_{H,i}U_i = y_i^{*}-A_i^{H}x_i(k)$ 
is feasible for any $y_i^{*}$.
Substituting
\(
x_i(k+H) = A_i^H x_i(k) + \Phi_{H,i}U_i
\)
into \eqref{eqn:cost_LTI} yields the weighted least-squares problem
\begin{equation}
\min_{U_i} 
\sum\nolimits_{j\in\mathcal{S}_i} \tilde{\pi}_{ij}
\bigl\|A_i^H x_i(k) + \Phi_{H,i}U_i - y_j\bigr\|^2.
\label{eqn:least-square}
\end{equation}

By Lemma~1 (weighted least-squares solution), the cost \eqref{eqn:least-square}
is minimized whenever the terminal state equals the weighted barycenter
$y_i^{*}$ in \eqref{eq:centroid}. Hence, every minimizer must satisfy
\[
x_i(k+H) = y_i^{*}
\quad\Longleftrightarrow\quad
\Phi_{H,i} U_i = y_i^{*} - A_i^H x_i(k),
\]
which is feasible by the full-row-rank property of $\Phi_{H,i}$ and matches
the affine constraint in \eqref{eq:affine_constraint}.

Among all $U_i$ satisfying \eqref{eq:affine_constraint}, Lemma~2
(minimum-norm solution under affine constraint) states that the
minimum-norm solution is $U_i^{*}
= \Phi_{H,i}^{\top}(\Phi_{H,i}\Phi_{H,i}^{\top})^{-1}
\bigl(y_i^{*}-A_i^{H}x_i(k)\bigr)$.

For the LTI system in \eqref{eqn:LTI}, one can verify that
\[
\textstyle\Phi_{H,i}\Phi_{H,i}^{\top}
=
\sum_{t=0}^{H-1} A_i^{t}B_iB_i^\top(A_i^\top)^{t}
= \Gamma_{H,i},
\]
and hence, $U_i^{*}
= \Phi_{H,i}^{\top}\Gamma_{H,i}^{-1}
\bigl(y_i^{*}-A_i^{H}x_i(k)\bigr)$.

The $t$th block of $U_i^{*}$ corresponds to $u_i^{*}(k+t)$ and is given by

\vspace{-.15in}
\[
u_i^{*}(k+t)
= B_i^\top (A_i^\top)^{H-1-t} 
\Gamma_{H,i}^{-1}\bigl(y_i^{*}-A_i^{H}x_i(k)\bigr),
\]
for $t=0,\dots,H-1$, which matches \eqref{eqn:optimal control for linear}.
Since every minimizer of \eqref{eqn:least-square} must satisfy the barycenter
condition and $U_i^{*}$ is the unique minimum-norm solution under this
constraint, the sequence $\{u_i^{*}(k+t)\}_{t=0}^{H-1}$ minimizes the cost
$J(U_i)$ in \eqref{eqn:cost_LTI}.
\end{proof}
\vspace{-.15in}

\subsection{Control-Affine Nonlinear System Case}

In the LTI case, the cost function in \eqref{eqn:cost_LTI} leverages the linear system dynamics, with the terminal state \( x_i(k+H) \) as an affine function of the control sequence \( U_i \), resulting in a quadratic cost. This allows the problem to be formulated as a regularized least-squares problem in \eqref{eqn:least-square}, implicitly penalizing control effort without an explicit cost term.

For control-affine nonlinear systems, however, \( x_i(k+H) \) is typically nonlinear in \( U_i \), preventing a quadratic cost form. Without explicit regularization, the optimizer may use large control inputs to reach the target, leading to ill-posed or unstable solutions. To mitigate this, we explicitly add a quadratic penalty on control effort to ensure a well-posed and physically meaningful solution as follows:

\vspace{-.15in}
{\small
\begin{align}
    J(u_i) &= \frac{1}{2} \sum_{j\in\mathcal{S}_i} \tilde{\pi}_{ij} \left\| x_i(k+H) - y_j \right\|^2 
    + \frac{1}{2} \sum_{t=k}^{k+H-1} u_i(t)^\top R u_i(t), \nonumber\\
    &\text{where } \textstyle\tilde{\pi}_{ij} \geq 0 \text{ and }  \sum_{j\in\mathcal{S}_i} \tilde{\pi}_{ij} = \frac{1}{M}. \label{eqn:nonlinear-cost}
\end{align}    
}

The addition of the control cost ensures that the optimization balances the trade-off between achieving the desired transport objective and minimizing control energy.

The optimal control input for the nonlinear system with the cost function \eqref{eqn:nonlinear-cost} is then provided below.

\begin{proposition}[Optimality Conditions for the Nonlinear Case]
For the control-affine dynamics in~\eqref{eqn:nonlinear_con_aff} and the 
finite-horizon cost~\eqref{eqn:nonlinear-cost}, the necessary conditions 
for optimality follow directly from Pontryagin’s minimum principle. The 
resulting state-costate recursion and optimal control law are:
\begin{align}
\lambda_i(t) &= 
\nabla_x f_i(x_i(t))^\top \lambda_i(t+1) \\
&\quad
+ \sum\nolimits_{j=1}^{m} 
    \bigl(\nabla_x g_{i,j}(x_i(t))\,u_{i,j}(t)\bigr)^\top 
    \lambda_i(t+1),\nonumber\\
    u_i(t) &= -R^{-1} g_i(x_i(t))^\top \lambda_i(t+1), \label{eq:non_opt_u}\\
    \lambda_i(k+H) &= 
        \omega_i\bigl(x_i(k+H) - y_i^*\bigr),
\end{align}
where \( \omega_i = \sum_{j\in\mathcal{S}_i} \tilde{\pi}_{ij} \) and 
\( y_i^* \) is the weighted barycenter of the assigned sample points.
These relations form the two-point boundary value system used to compute 
the nonlinear control input.
\end{proposition}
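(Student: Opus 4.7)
The plan is to apply the discrete-time Pontryagin minimum principle to the finite-horizon problem defined by the dynamics \eqref{eqn:nonlinear_con_aff} and cost \eqref{eqn:nonlinear-cost}. I would introduce a costate sequence $\{\lambda_i(t)\}_{t=k}^{k+H}$ and form the stage-wise discrete-time Hamiltonian
\[
\mathcal{H}_t = \tfrac{1}{2} u_i(t)^\top R\, u_i(t) + \lambda_i(t+1)^\top\bigl(f_i(x_i(t)) + g_i(x_i(t))\,u_i(t)\bigr),
\]
whose first term is the running cost and whose second term adjoins the state-update constraint through the costate multiplier. The three claimed relations then correspond exactly to the three necessary conditions of the discrete maximum principle: the costate recursion $\lambda_i(t) = \nabla_{x_i(t)}\mathcal{H}_t$, the control stationarity condition $\nabla_{u_i(t)}\mathcal{H}_t = 0$, and the transversality (terminal) condition obtained by differentiating the terminal cost in \eqref{eqn:nonlinear-cost} with respect to $x_i(k+H)$.

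Next I would verify the three conditions one by one. The stationarity condition is the simplest: since $\mathcal{H}_t$ is strictly convex quadratic in $u_i(t)$ with positive-definite Hessian $R$, setting $\nabla_{u_i(t)}\mathcal{H}_t = R\, u_i(t) + g_i(x_i(t))^\top \lambda_i(t+1) = 0$ immediately yields the optimal control law \eqref{eq:non_opt_u}. The terminal condition follows by differentiating $\tfrac{1}{2}\sum_{j\in\mathcal{S}_i}\tilde{\pi}_{ij}\|x - y_j\|^2$ with respect to $x$ at $x = x_i(k+H)$, which gives $\sum_{j\in\mathcal{S}_i}\tilde{\pi}_{ij}(x_i(k+H) - y_j) = \omega_i\bigl(x_i(k+H) - y_i^{*}\bigr)$ after invoking the definitions of $\omega_i$ and $y_i^{*}$ in \eqref{eq:centroid}.

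The main obstacle, and the only nontrivial step, is deriving the costate recursion, since $g_i(x_i(t))\,u_i(t)$ couples the state dependence of $g_i$ with the control input and so requires careful use of the product structure. To handle this cleanly, I would decompose $g_i(x_i(t))\,u_i(t) = \sum_{j=1}^{m} g_{i,j}(x_i(t))\,u_{i,j}(t)$, with $g_{i,j}$ denoting the $j$-th column of $g_i$, treat $u_i(t)$ as fixed during the state differentiation (its role in optimality is already captured by the stationarity condition), and apply linearity of the gradient together with the drift contribution $\nabla_x f_i(x_i(t))^\top \lambda_i(t+1)$. Summing the columnwise derivatives then reproduces the stated costate equation exactly, and combining the three conditions with the forward state recursion yields the two-point boundary-value system claimed in the proposition.
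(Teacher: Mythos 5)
Your derivation is correct and is precisely the standard discrete-time Pontryagin argument that the paper invokes (the paper omits the proof, stating the conditions follow directly from the discrete minimum principle for control-affine systems). Your Hamiltonian, the columnwise decomposition of $g_i(x_i(t))u_i(t)$ for the costate recursion, the stationarity condition yielding \eqref{eq:non_opt_u}, and the transversality condition recovering $\omega_i(x_i(k+H)-y_i^*)$ all match the stated relations exactly.
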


\noindent
The derivation of these optimality conditions is not included, since 
they follow directly from the standard Pontryagin minimum principle for 
discrete-time, control-affine systems. We summarize the result here only 
to state the conditions required for computing the nonlinear controller.

\section{Fully Decentralized Implementation via Local Memory}
The previous framework, particularly the Sample Selection Phase, assumes each agent has access to the target weights of others, requiring global or centralized communication. This is often impractical in real-world scenarios where communication is local and may suffer from packet loss or latency. To overcome this, we propose a fully decentralized scheme where each agent maintains a memory of previously received weights and reuses them when direct communication is unavailable. This allows autonomous weight initialization and updates using only local, asynchronous information. In this way, the memory rule also functions as a simple worst-case fallback mechanism under temporary communication loss, ensuring that the sample-selection step remains well defined even when connectivity is intermittent or time-varying.
\vspace{-.15in}

\subsection{Sequential Update Scheme in a Decentralized Setup}
Suppose \( r \) agents (\( r > 1 \)) are within agent \( i \)'s communication range \( r_c \), forming a local subgroup of \( r+1 \) agents. Within this subgroup, agents sequentially select local sample points similar to the centralized scheme in Section~III-A to avoid barycenter overlap. All-to-all communication within this subgroup is unnecessary since a spanning tree topology suffices to share local sample point selection information.

While sample point overlap with agents outside the communication range may seem concerning, it is typically less impactful due to their spatial separation. (This concern will be clearly handled in the simulation section.) A more significant issue arises from previously connected agents that move out of range but remain nearby. To address this, a local memory mechanism is used to track past selections and minimize conflicts.
\vspace{-.15in}

\subsection{Local Memory Construction}
Let \( \underline{\boldsymbol{\beta}}_j^{(k)} \in \mathbb{R}^{|\mathcal{S}_j|} \) denote the weight vector associated with the
local sample set \( \mathcal{S}_j \) of agent \( j \) at time \( k \), and let \( x_i(k) \in \mathbb{R}^n \) represent the position of agent \( i \). The communication range is defined by a scalar \( r_c > 0 \). Each agent \( i \) maintains a memory structure that stores the most recently received weight vectors from other agents, defined as
\vspace{-.15in}

{\small
\begin{equation}
    \texttt{Memory}_i^{(k)}(j) =
    \begin{cases}
        \boldsymbol{\underline{\beta}}_j^{(k)}, & \text{if } \|x_i(k) - x_j(k)\| < r_c, \\
        \texttt{Memory}_i^{(k-1)}(j), & \text{otherwise}.
    \end{cases}
\end{equation}
}

This rule ensures that memory is updated only when the neighboring agent is within communication range. Otherwise, the agent retains the most recently received weight vector from the previous iteration.
\vspace{-.05in}

\subsection{Weight Update with Memory Decay}

Each agent initializes its weight vector as 
\( \boldsymbol{\underline{\beta}}_{\text{init}} \in \mathbb{R}^N \), 
a global weight vector defined over the entire sample set (distinct from 
\( \boldsymbol{\underline{\beta}}_i^{(k)} \), which applies only to the local sample set 
\( \mathcal{S}_i \)), with uniform entries \( 1/N \). 
Every \( H \) steps, the agent resets its local weight via  
\( \boldsymbol{\underline{\beta}}_i^{(k)} \gets \boldsymbol{\underline{\beta}}_{\text{init}} \).
When agents are within communication range, defined by \( \mathcal{N}_i^{(k)} := \{ j \mid \|x_i(k) - x_j(k)\| < r_c \} \), they form a subgroup and sequentially update their weights to reduce overlap in selecting local sample points. This update, reflected in \( \boldsymbol{\underline{\beta}}_i^{(k)} \), reduces the weights of points already chosen by neighbors, as described in the Sample Selection Phase, which encourages spatial diversity.

To account for previously connected but currently disconnected agents (\( j \notin \mathcal{N}_i^{(k)} \)), the agent applies a memory-based correction:
\(
\boldsymbol{\underline{\beta}}_i^{(k)} \gets \boldsymbol{\underline{\beta}}_i^{(k)} - \gamma \cdot \min_{j \notin \mathcal{N}_i^{(k)}} \texttt{Memory}_i^{(k-1)}(j),
\)
where \( \gamma \in [0,1] \) is a decay factor.

The decay factor \( \gamma \) adjusts how strongly outdated information is suppressed: \( \gamma = 0 \) disables memory correction, while larger values subtract more previously stored information from the current weights, reducing sample-point overlap under intermittent communication.
The element-wise minimum ensures a conservative adjustment, compensating for outdated information while avoiding overrepresentation of already-covered areas. Finally, an element-wise non-negativity projection is applied to ensure \( \boldsymbol{\underline{\beta}}_i^{(k)} \succeq 0 \) as follows:
$\boldsymbol{\underline{\beta}}_i^{(k)} \gets \max\big( \boldsymbol{0}, \boldsymbol{\underline{\beta}}_i^{(k)} \big)$  (element-wise).

The resulting coordination scheme is fully decentralized: each agent
updates its local weights, exchanges information only with neighbors
within its communication range, and maintains a finite memory of past
interactions. The weight update occurs every $H$ steps and requires only
local computations together with a simple reduction rule that resolves
temporary disconnections. Consequently, each agent selects its local
sample points without global synchronization. The procedure remains
operational under intermittent communication and integrates naturally
into the two-stage cycle structure of the algorithm. These structural
properties also facilitate a direct comparison with centralized
counterparts and underpin the convergence analysis developed in the next
section.

\begin{remark}
In a centralized version, the Sample Selection Phase performs a global 
greedy allocation over all $M$ agents and $N$ samples, with complexity 
$O(N \log N + MN)$ and global weight broadcasts each cycle. The proposed 
decentralized scheme replaces this with an $O(N \log N)$ local update 
per agent and neighborhood-only communication, while the control update 
remains decentralized in both cases. Thus, computation and communication 
scale with local neighborhood size rather than the total number of 
agents.
\end{remark}


\section{Cycle-Level Convergence Guarantees}
At the beginning of each cycle, every agent commits to a fixed local transport plan $\tilde{\pi}_{ij}$ determined by the preceding selection phase and follows the corresponding control inputs during the cycle. This section shows that
the resulting evolution possesses a monotonic, cycle-wise descent property and establishes a cycle-level bound on the discrepancy between the agent
distribution and the target distribution. These results provide the key convergence guarantees of the proposed framework.

\begin{theorem}[Cycle-wise Descent of the Surrogate Transport Cost]\label{thm:lyapunov}
Consider a multi-agent system whose states evolve under either the LTI
dynamics~\eqref{eqn:LTI} or the nonlinear dynamics~\eqref{eqn:nonlinear_con_aff}.
For a fixed cycle length $H\in\mathbb{N}$, define the surrogate transport
cost associated with a local transport plan $\tilde{\pi}_{ij}^{(\ell)}\ge0$ by
\[
\textstyle\Psi^{(\ell)}(k)
    := \sum_{i=1}^M \sum_{j\in\mathcal{S}_i^{(\ell)}}
       \tilde{\pi}_{ij}^{(\ell)} \|x_i(k)-y_j\|^2 .
\]

At the beginning of cycle $\ell$, i.e., at $k_\ell=\ell H$, the local
sample-selection phase fixes the sets $\mathcal{S}_i^{(\ell)}$ and selects
nonnegative transport weights $\tilde{\pi}_{ij}^{(\ell)}$ satisfying the
feasibility conditions
\begin{equation}
\textstyle
\sum_{j\in\mathcal{S}_i^{(\ell)}} \tilde{\pi}_{ij}^{(\ell)} = \frac{1}{M},
\qquad
\sum_{i=1}^M \tilde{\pi}_{ij}^{(\ell)} = \beta_j ,\label{eq:OT_feasibility}
\end{equation}
so that $\tilde{\pi}^{(\ell)}$ defines a valid coupling between
$\mu_{k_\ell}$ and $\nu$.  
The corresponding aggregate weights $\omega_i^{(\ell)}$ and barycenters 
$y_i^{*,(\ell)}$ are then computed as in~\eqref{eq:centroid}.

If the control inputs applied over $k\in\{k_\ell,\dots,k_{\ell+1}-1\}$
satisfy the terminal condition
$x_i(k_{\ell+1}) = y_i^{*,(\ell)}$, $i=1,\dots,M$,
then the surrogate cost decreases across the cycle:
\begin{equation}
\Psi^{(\ell)}(k_{\ell+1}) \le \Psi^{(\ell)}(k_\ell),\label{eq:Phi_ineq}
\end{equation}
with strict inequality whenever at least one agent satisfies
$x_i(k_\ell)\neq y_i^{*,(\ell)}$.

Moreover, for any $k\in\{k_\ell,\dots,k_{\ell+1}\}$,
\begin{equation}
\mathcal{W}_2^2(\mu_k,\nu) \le \Psi^{(\ell)}(k). \label{eq:upper_bound}
\end{equation}
\end{theorem}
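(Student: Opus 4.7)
The plan is to prove the two inequalities separately and rather directly. For the cycle-wise descent $\Psi^{(\ell)}(k_{\ell+1}) \le \Psi^{(\ell)}(k_\ell)$, I would apply the weighted variance (parallel-axis) decomposition agent by agent and sum the result. For the surrogate upper bound $\mathcal{W}_2^2(\mu_k,\nu) \le \Psi^{(\ell)}(k)$, I would check that the cycle-wise coupling $\tilde{\pi}^{(\ell)}$ remains feasible for $(\mu_k,\nu)$ at every intermediate time $k$, and then invoke the Kantorovich infimum characterization of $\mathcal{W}_2^2$.

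For the descent step, for each agent $i$ and any point $x\in\mathbb{R}^n$ the identity $\sum_{j\in\mathcal{S}_i^{(\ell)}}\tilde{\pi}_{ij}^{(\ell)}\|x-y_j\|^2 = \omega_i^{(\ell)}\|x-y_i^{*,(\ell)}\|^2 + \sum_{j\in\mathcal{S}_i^{(\ell)}}\tilde{\pi}_{ij}^{(\ell)}\|y_i^{*,(\ell)}-y_j\|^2$ follows directly from expanding the squared norm and using the definition of $y_i^{*,(\ell)}$ in \eqref{eq:centroid}. Evaluating this identity at $x=x_i(k_\ell)$ and at $x=x_i(k_{\ell+1})=y_i^{*,(\ell)}$ and summing over $i$ yields $\Psi^{(\ell)}(k_\ell)-\Psi^{(\ell)}(k_{\ell+1}) = \sum_{i=1}^M \omega_i^{(\ell)}\|x_i(k_\ell)-y_i^{*,(\ell)}\|^2$, which is nonnegative and strictly positive whenever some agent does not start at its barycenter. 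The required terminal condition $x_i(k_{\ell+1})=y_i^{*,(\ell)}$ is exactly what the controller of Theorem~\ref{thm:LTI_control} delivers in the LTI case and what the terminal boundary condition in Proposition~1 enforces in the nonlinear case, so both dynamical settings are covered uniformly.

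For the upper bound, the key observation is that the cycle-wise coupling $\tilde{\pi}^{(\ell)}$ is constructed in the selection phase so that its marginals match those of $\mu_{k_\ell}$ and $\nu$ via \eqref{eq:OT_feasibility}. Since the mass $1/M$ associated with agent $i$ in $\mu_k$ is determined by the agent index and not by its current position $x_i(k)$, the same plan $\tilde{\pi}^{(\ell)}$ remains a feasible Kantorovich coupling between $\mu_k$ and $\nu$ for every $k\in\{k_\ell,\dots,k_{\ell+1}\}$; the Kantorovich infimum definition then yields $\mathcal{W}_2^2(\mu_k,\nu) \le \sum_{i,j}\tilde{\pi}_{ij}^{(\ell)}\|x_i(k)-y_j\|^2 = \Psi^{(\ell)}(k)$. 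I do not expect a serious obstacle: the only point deserving care is this feasibility claim, where one must emphasize that the marginals depend on agent labels and fixed masses rather than on the evolving physical locations, so freezing the coupling at the start of the cycle still produces a valid plan throughout. Once this is stated, the remaining steps reduce to the parallel-axis identity and the infimum definition of the Wasserstein distance.
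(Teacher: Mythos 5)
Your proposal is correct and follows essentially the same route as the paper: the parallel-axis identity you use is exactly the paper's completion-of-the-square of $J_i^{(\ell)}$ around the barycenter $y_i^{*,(\ell)}$ (with the constant written as the weighted spread about $y_i^{*,(\ell)}$ rather than $\sum_j\tilde{\pi}_{ij}^{(\ell)}\|y_j\|^2-\omega_i^{(\ell)}\|y_i^{*,(\ell)}\|^2$), and your upper-bound argument is the same label-based feasibility observation followed by the Kantorovich infimum characterization. No gaps.
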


\begin{proof}
Fix a cycle index $\ell$ and the corresponding local sample set
$\mathcal{S}_i^{(\ell)}$ and weights $\tilde{\pi}_{ij}^{(\ell)}$.
For each agent $i$, define the quadratic function
\begin{equation}
J_i^{(\ell)}(x)
    :=
    \sum\nolimits_{j\in\mathcal{S}_i^{(\ell)}}
    \tilde{\pi}_{ij}^{(\ell)} \|x - y_j\|^2,
    \qquad x\in\mathbb{R}^n.
\label{eq:Ji_def_new}
\end{equation}
Expanding the squared norm gives
\begin{align}
&J_i^{(\ell)}(x)
=
\sum_{j\in\mathcal{S}_i^{(\ell)}}
\tilde{\pi}_{ij}^{(\ell)}
\bigl(\|x\|^2 - 2 x^\top y_j + \|y_j\|^2\bigr)\label{eq:Ji_expanded}\\
&=
\Bigl(\sum_{j\in\mathcal{S}_i^{(\ell)}}\tilde{\pi}_{ij}^{(\ell)}\Bigr)\|x\|^2
 - 2 x^\top
 \sum_{j\in\mathcal{S}_i^{(\ell)}}
 \tilde{\pi}_{ij}^{(\ell)} y_j
 + \sum_{j\in\mathcal{S}_i^{(\ell)}}
   \tilde{\pi}_{ij}^{(\ell)} \|y_j\|^2.\nonumber
\end{align}
Using the definitions of $\omega_i^{(\ell)}$ and $y_i^{*,(\ell)}$ in
\eqref{eq:centroid}, the middle term can be rewritten as
$
- 2 x^\top
 \sum_{j\in\mathcal{S}_i^{(\ell)}}
 \tilde{\pi}_{ij}^{(\ell)} y_j
 = -2 \omega_i^{(\ell)} x^\top y_i^{*,(\ell)}$.

Substituting into~\eqref{eq:Ji_expanded} yields
\begin{align}
J_i^{(\ell)}(x)
&=
\omega_i^{(\ell)}\|x\|^2
 - 2 \omega_i^{(\ell)} x^\top y_i^{*,(\ell)}
 + \sum\nolimits_{j\in\mathcal{S}_i^{(\ell)}}
   \tilde{\pi}_{ij}^{(\ell)} \|y_j\|^2\nonumber\\
&=
\omega_i^{(\ell)}\|x-y_i^{*,(\ell)}\|^2
 + C_i^{(\ell)},
\label{eq:Ji_completed_square_new}
\end{align}
where $C_i^{(\ell)}
    :=
    \sum_{j\in\mathcal{S}_i^{(\ell)}}
    \tilde{\pi}_{ij}^{(\ell)} \|y_j\|^2
    - \omega_i^{(\ell)} \|y_i^{*,(\ell)}\|^2$
depends only on the sample locations and weights, and is independent of~$x$.

Since $\omega_i^{(\ell)} > 0$ by construction, the term
$\omega_i^{(\ell)}\|x-y_i^{*,(\ell)}\|^2$ is nonnegative and vanishes
if and only if $x=y_i^{*,(\ell)}$. Thus $J_i^{(\ell)}$ is a strictly
convex quadratic function with unique minimizer $y_i^{*,(\ell)}$.

Evaluating~\eqref{eq:Ji_completed_square_new} at the beginning and the end
of the cycle, we obtain
\vspace{-.25in}

\begin{align*}
J_i^{(\ell)}(x_i(k_\ell))
&=
\omega_i^{(\ell)}
\|x_i(k_\ell) - y_i^{*,(\ell)}\|^2
+ C_i^{(\ell)},\\
J_i^{(\ell)}(x_i(k_{\ell+1}))
&=
\omega_i^{(\ell)}
\|x_i(k_{\ell+1}) - y_i^{*,(\ell)}\|^2
+ C_i^{(\ell)}.
\end{align*}
The terminal condition
$x_i(k_{\ell+1}) = y_i^{*,(\ell)}$ for all $i$
implies
\begin{equation*}
    \begin{aligned}
        J_i^{(\ell)}(x_i(k_{\ell+1}))
        = C_i^{(\ell)}
        &\le
        \omega_i^{(\ell)}
        \|x_i(k_\ell) - y_i^{*,(\ell)}\|^2
        + C_i^{(\ell)}\\
        &= J_i^{(\ell)}(x_i(k_\ell)),
    \end{aligned}
\end{equation*}
with strict inequality whenever $x_i(k_\ell)\neq y_i^{*,(\ell)}$.

Summing over $i$ and using the definition of $\Psi^{(\ell)}$ gives
{\small
\begin{equation*}
\Psi^{(\ell)}(k_{\ell+1})
=
\sum_{i=1}^M J_i^{(\ell)}(x_i(k_{\ell+1}))
\le
\sum_{i=1}^M J_i^{(\ell)}(x_i(k_\ell))
=
\Psi^{(\ell)}(k_\ell),
\end{equation*}
}

\noindent and strict descent when at least one agent moves.

We now prove~\eqref{eq:upper_bound}.
For any $k\in\{k_\ell,\dots,k_{\ell+1}\}$, the empirical distribution
is
$\mu_k = \frac{1}{M}\sum_{i=1}^M \delta_{x_i(k)}$.
By construction of the local transport weights at the beginning of
cycle $\ell$, the entries $\tilde{\pi}_{ij}^{(\ell)}$ satisfy
$\sum_{j=1}^N \tilde{\pi}_{ij}^{(\ell)} = \frac{1}{M}$,
$\sum_{i=1}^M \tilde{\pi}_{ij}^{(\ell)} = \beta_j$,
for all $i$ and $j$. Because the weights are kept fixed throughout the
cycle, these equalities remain valid for every $k\in\{k_\ell,\dots,k_{\ell+1}\}$.
Hence, $\tilde{\pi}^{(\ell)}$ defines a feasible coupling between
$\mu_k$ and $\nu$ for all such $k$, and the corresponding transport cost
is exactly
$
\sum_{i=1}^M \sum_{j=1}^N
   \tilde{\pi}_{ij}^{(\ell)} \|x_i(k) - y_j\|^2
= \Psi^{(\ell)}(k).
$
Since $\mathcal{W}_2^2(\mu_k,\nu)$ is the \textit{minimum transport cost} over
all feasible couplings, it follows that
$
\mathcal{W}_2^2(\mu_k,\nu)
\;\le\;
\Psi^{(\ell)}(k).
$
\end{proof}

\begin{remark}\label{remark:5}
The inequality \eqref{eq:upper_bound} holds whenever the cycle-wise transport plan $\tilde{\pi}_{ij}^{(\ell)}$ satisfies the usual OT feasibility condition \eqref{eq:OT_feasibility}.
This condition is automatically enforced in the centralized implementation.
In decentralized variants, however, local updates do not necessarily preserve the global
weight balance, and the inequality may therefore fail to hold at every cycle.
Nevertheless, the cycle-wise decrease of \( \Psi^{(\ell)} \) in \eqref{eq:Phi_ineq} always holds in all cases (centralized, decentralized, with or without memory) because the descent argument depends only on the agents moving to their assigned barycenters.
\end{remark}
\vspace{-.1in}

\section{Simulations}
To validate the technical soundness of the proposed method, we conducted a series of simulations on both LTI and nonlinear systems. The simulation settings and results are detailed below, along with quantitative evaluations using the Wasserstein distance as the performance metric.

\vspace{-0.075in}
\subsection{LTI System Case}
An LTI system is considered with synthetically generated system matrices $A_i \in \mathbb{R}^{2 \times 2}$ and $B_i \in \mathbb{R}^2$, which are identical across all 30 agents. Further, the planning horizon is set to $H=50$. The simulation results are shown in Fig.~\ref{fig:sim-LTI}, where the target distribution is illustrated by 1,000 green dots. 
The agents' initial and final positions are marked by blue crosses and red squares, respectively, and their trajectories are depicted by black lines. 
Since the target distribution is a multimodal Gaussian mixture, the induced Wasserstein cost landscape is highly nonconvex and contains 
multiple attraction regions around its modes. Under the decentralized, memoryless policy, agents initialized near a particular mode tend to converge toward that mode and may remain concentrated there.

Two scenarios are considered: Fig.~\ref{fig:sim-LTI}(a) shows a concentrated initial distribution $x(0)$, uniformly sampled from a small square while Fig.~\ref{fig:sim-LTI}(b) displays a uniform distribution across the entire domain.

\begin{figure}[!t]
    \centering
    \subfloat[centralized, concentrated $x(0)$]{
    \includegraphics[width=0.47\linewidth, trim=100 220 120 250, clip]{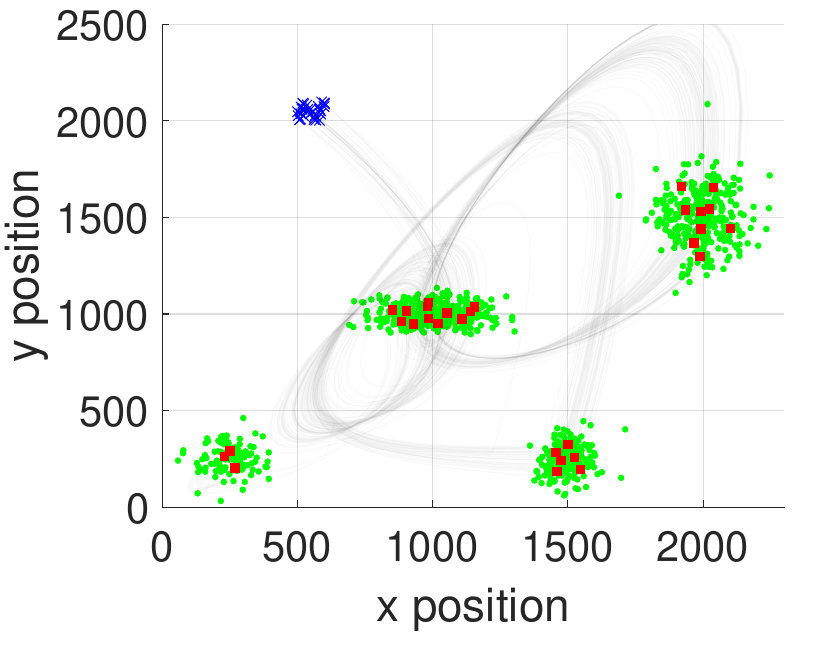}}\;
    \subfloat[decentralized w/o memory ($\gamma=0$), uniform $x(0)$]{
    \includegraphics[width=0.47\linewidth, trim=100 220 120 250, clip]{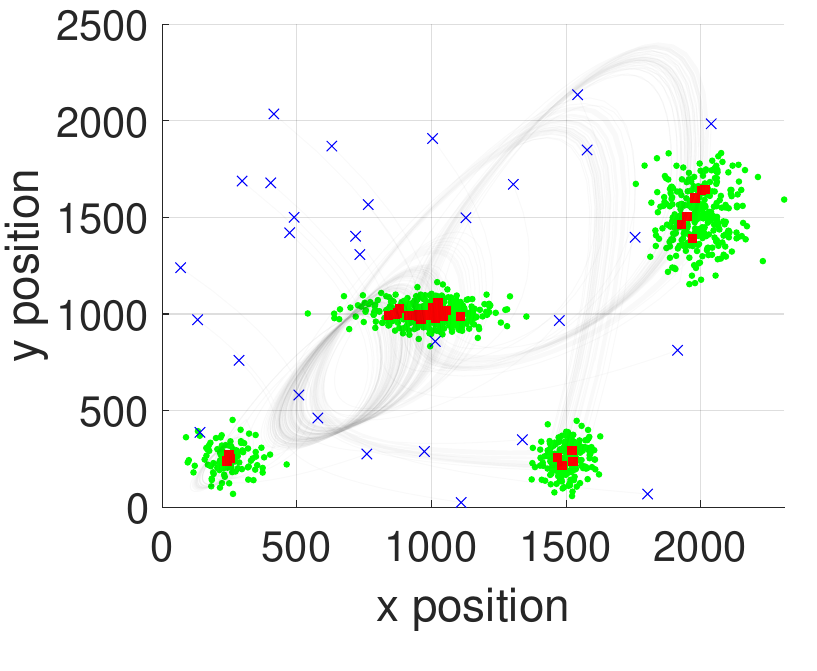}}
    \caption{Distribution matching for the LTI system}
    \label{fig:sim-LTI}
\end{figure}

In Fig.~\ref{fig:sim-LTI}(a), the centralized communication with the distributed controller enables successful distribution matching, even under biased initial conditions, by sharing target-point weights globally. Fig.~\ref{fig:sim-LTI}(b) evaluates a decentralized scheme with a limited communication range ($r_c = 20$) and no memory ($\gamma = 0$), where agents still approximate the target distribution reasonably well due to the wide initial spread even without memory.
\vspace{-.1in}

\begin{figure}[!b]
    \centering
    \subfloat[Decentralized, w/o memory ($\gamma=0$)]{
    \includegraphics[width=0.46\linewidth, trim=130 220 150 250, clip]{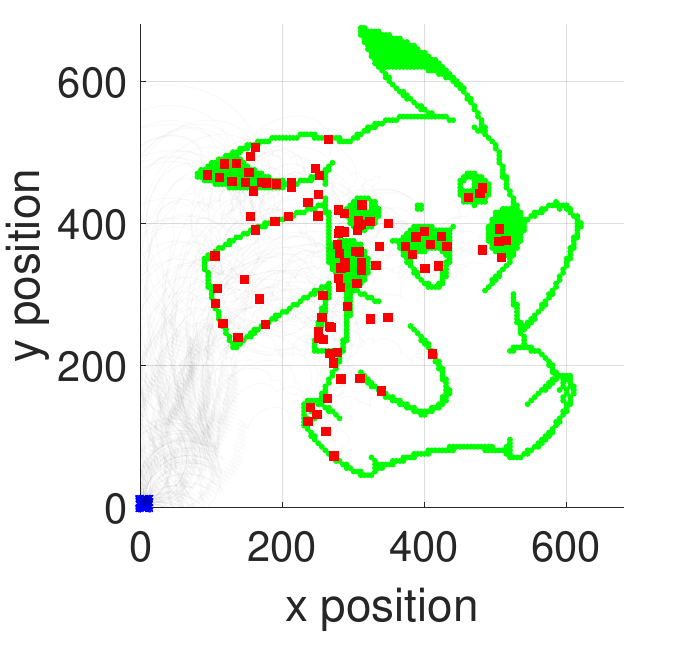}}\quad
    \subfloat[Decentralized, w/ memory ($\gamma=0.7$)]{
    \includegraphics[width=0.46\linewidth, trim=130 220 150 250, clip]{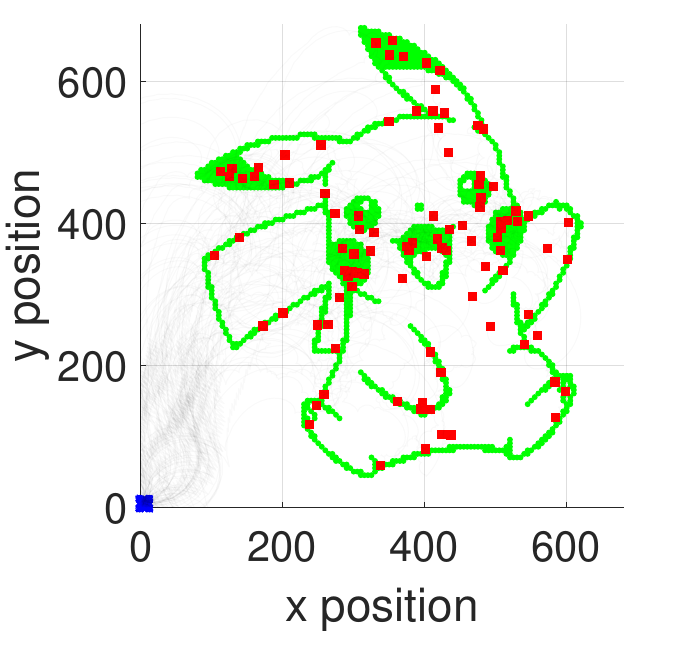}}
    \caption{Distribution matching for the unicycle model.}
    \label{fig:sim-non}
\end{figure}
\subsection{Nonlinear System Case}
We consider a control-affine nonlinear system modeled by unicycle dynamics. 
While the optimality conditions in~\eqref{eq:non_opt_u} characterize the $H$-step solution, 
in our simulations we adopt a one-step receding-horizon implementation for computational simplicity. 
This yields a closed-form control update that is applied repeatedly over a scheduling cycle of $H=50$ steps, 
during which the assigned barycenter remains fixed.

In this setup, 100 agents are initialized near the origin and tasked with matching a non-uniform target distribution, shown in green in Fig. \ref{fig:sim-non}. The target distribution is represented by 1,538 sample points while the same symbol convention is used as in the LTI case.

Fig.~\ref{fig:sim-non}(a) shows that, without memory, decentralized agents with \( r_c = 20 \) struggle to match the target distribution due to limited coordination. In contrast, Fig.~\ref{fig:sim-non}(b) demonstrates that incorporating memory significantly improves distribution matching, as agents implicitly coordinate their behaviors.
\vspace{-.1in}

\subsection{Cycle-Wise Convergence Behavior}

We evaluate the cycle-wise behavior of the surrogate transport cost and the
true Wasserstein distance using the same nonlinear scenario shown earlier in
Fig.~\ref{fig:sim-non}. Each cycle consists of $H=50$ steps, and the results in
Fig.~\ref{fig:W-dist} report the values only at the cycle boundaries
$k_\ell=\ell H$ for 20 cycles.

\begin{figure}[!t]
    \centering
    \subfloat[Centralized]{
    \includegraphics[width=0.46\linewidth]{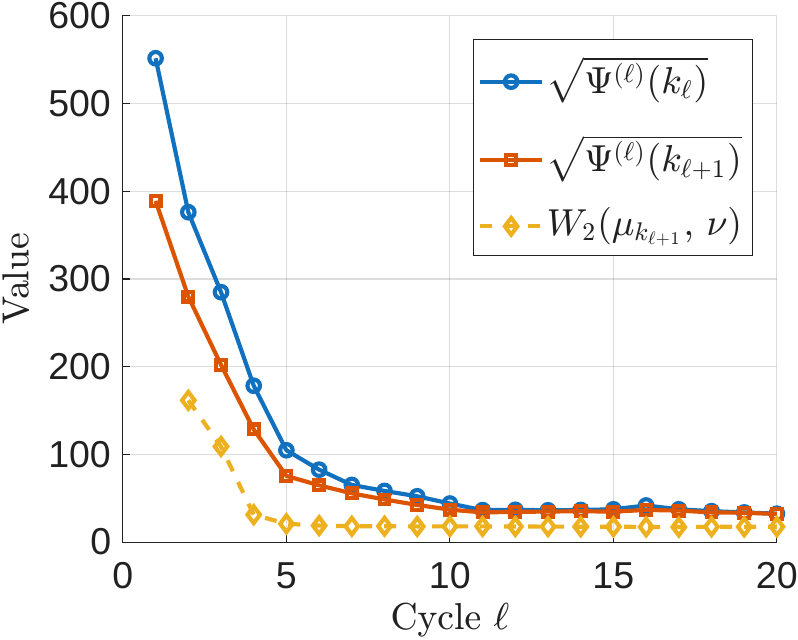}}\quad
    \subfloat[Decentralized, w/ memory ($\gamma=0.7$)]{
    \includegraphics[width=0.46\linewidth]{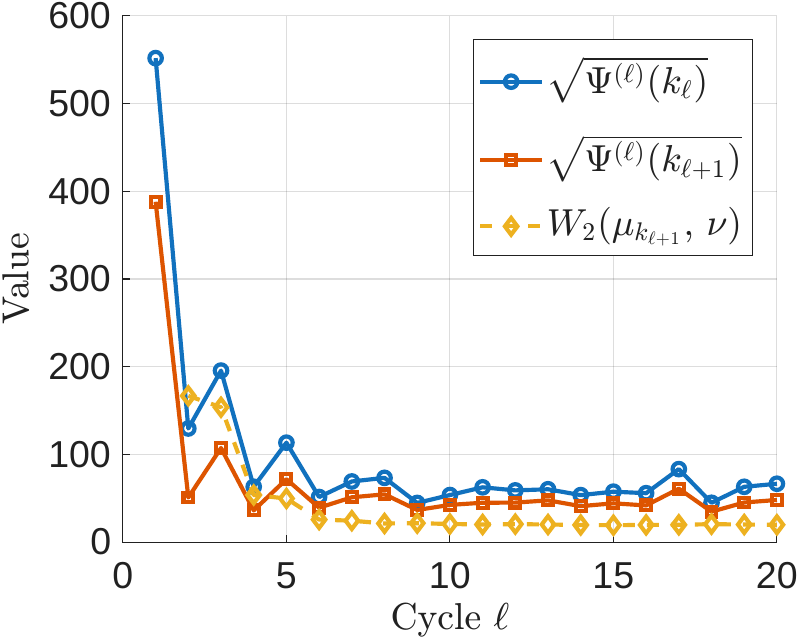}}
    \caption{Cycle-wise evolution of $\sqrt{\Psi^{(\ell)}(k_{\ell})}$, $\sqrt{\Psi^{(\ell)}(k_{\ell+1})}$, and
    $\mathcal{W}_2(\mu_{k_{\ell+1}},\nu)$.}
    \label{fig:W-dist}
    \vspace{-.1in}
\end{figure}

In the centralized case (Fig.~\ref{fig:W-dist}(a)), the simulation behavior
matches the theoretical prediction precisely. The surrogate transport cost at
the end of each cycle (red squares) is always less than or equal to its value
at the beginning of that cycle (blue circles), exactly as stated in Theorem~2.
Furthermore, the Wasserstein distance (yellow diamonds) consistently remains
below the red curve at every cycle boundary, confirming that the surrogate cost
acts as a valid upper bound under centralized assignment. This confirms that the
centrally computed transport plan always satisfies the global feasibility
conditions required by the theory, ensuring that the surrogate cost is a valid
upper bound on the true transport distance.

The decentralized case with memory (Fig.~\ref{fig:W-dist}(b)) exhibits a
different pattern. The surrogate cost at the end of each cycle (red squares)
remains less than or equal to its value at the beginning of that cycle (blue
circles), indicating that agents consistently reach their assigned barycenters
despite limited communication. However, the Wasserstein distance (yellow
diamonds) does not necessarily stay below the surrogate cost. This discrepancy
occurs because, without global synchronization, different agents may update the
residual target weights using only local and potentially outdated information.
Such inconsistencies lead to duplicated reductions of the same target weights,
breaking the global feasibility conditions \eqref{eq:OT_feasibility} required for the surrogate cost to
serve as an upper bound. Nonetheless, the descending red curve shows that the method still makes steady
cycle-wise progress toward the target distribution as described in Theorem \ref{thm:lyapunov} and Remark \ref{remark:5}, while the theoretical
guarantee available in the centralized setting no longer applies.

\section{Conclusion}

This paper presented a decentralized control framework for terminal distribution matching in multi-agent systems with discrete-time LTI and nonlinear dynamics. The method leverages optimal transport theory and the Wasserstein distance to guide agents toward a prescribed final distribution. A sequential update strategy enables communication-aware coordination, and a memory-based weight update scheme provides robustness under intermittent connectivity by reusing decayed information from previously connected neighbors. The framework admits a cycle-wise convergence guarantee under the required feasibility condition associated with the transport weights. Simulation results confirmed accurate terminal distribution matching with low overlap and reliable performance under limited communications.


\bibliographystyle{IEEEtran}
\bibliography{references}

\end{document}